\documentclass[11pt,a4paper,english]{article}

\usepackage{amsmath,amssymb,array,amsfonts,amsthm}

\usepackage[utf8]{inputenc}
\usepackage[OT1]{fontenc}

\usepackage{verbatim}
\usepackage{booktabs}
\usepackage{url}

\usepackage{floatflt}
\usepackage[algoruled,linesnumbered]{algorithm2e}
\usepackage{enumerate}

\usepackage{tikz}
\usetikzlibrary{arrows,shapes,through}

\newtheorem{Theorem}{Theorem}[section]

\theoremstyle{definition}

\newtheorem{Problem}{Problem}

\newcommand{\RFLCS}{\ensuremath{\textsc{RF-LCS}}\xspace}
\newcommand{\CLCS}{\ensuremath{\textsc{C-LCS}}\xspace}
\newcommand{\VLCS}{\ensuremath{\textsc{DC-LCS}}\xspace}
\newcommand{\SCS}{\ensuremath{\textsc{SCS}}\xspace}
\renewcommand{\emptyset}{\ensuremath{\varnothing}}

\DeclareMathOperator{\rev}{rev}

\begin{document}

\title{Variants of Constrained Longest Common Subsequence}
\author{
Paola Bonizzoni%
\thanks{Dipartimento di Informatica, Sistemistica e
    Comunicazione, Universit\`a degli Studi di Milano-Bicocca, Milano - Italy,
    {bonizzoni@disco.unimib.it}}
\and
Gianluca Della Vedova%
\thanks{Dipartimento di Statistica,
    Universit\`a degli Studi di Milano-Bicocca Milano - Italy,
    {gianluca.dellavedova@unimib.it}}
\and
Riccardo Dondi%
\thanks{Dipartimento di Scienze dei Linguaggi, della Comunicazione
  e degli Studi Culturali, Universit\`a degli Studi di Bergamo, Bergamo, Italy,
  {riccardo.dondi@unibg.it}}
\and
Yuri Pirola%
\thanks{Dipartimento di Informatica, Sistemistica e
    Comunicazione, Universit\`a degli Studi di Milano-Bicocca, Milano - Italy,
    {pirola@disco.unimib.it}}
}
\date{}

\maketitle

\begin{abstract}
In this work, we consider a variant of the classical Longest Common
Subsequence problem called Doubly-Constrained Longest Common
Subsequence (\VLCS).
Given two strings $s_1$ and $s_2$ over an alphabet $\Sigma$, a set $C_s$
of strings, and a function $C_o : \Sigma \to N$, the \VLCS problem
consists in finding the longest subsequence $s$ of $s_1$ and $s_2$ such
that $s$ is a supersequence of all the strings in $C_s$ and such
that the number of occurrences in $s$ of each symbol $\sigma \in \Sigma$
is upper bounded by $C_o(\sigma)$.
The \VLCS problem provides a clear mathematical formulation of a
sequence comparison problem in Computational Biology and generalizes two
other constrained variants of the LCS problem: the Constrained LCS and
the Repetition-Free LCS.
We present two results for the \VLCS problem.
First, we illustrate a fixed-parameter algorithm where the
parameter is the length of the solution.
Secondly, we prove a parameterized hardness result for the Constrained
LCS problem when the parameter is the number of the constraint strings
($|C_s|$) and the size of the alphabet $\Sigma$.
This hardness result also implies the parameterized hardness of the \VLCS
problem (with the same parameters) and its NP-hardness when the size of
the alphabet is constant.
\end{abstract}

\section{Introduction}
\label{introduction}

The problem of computing the longest common subsequence (LCS)
of two  sequences is a fundamental problem in stringology and in the whole
field of algorithms, as it couples a wide range of applications with a simple
mathematical formulation.
Applications of variants of LCS range from Computational Biology to data
compression, syntactic pattern recognition and file comparison (for
instance it is used in the Unix \emph{diff} command).

A few basic definitions are in order.
Given two sequences $s$ and $t$  over a finite alphabet
$\Sigma$, $s$
is a \emph{subsequence} of $t$ if $s$ can be obtained from $t$
by removing  some (possibly zero) characters.
When $s$ is a subsequence of
$t$, then $t$ is a \emph{supersequence} of $s$.
Given two sequences $s_1$ and $s_2$, the longest common subsequence
problem asks for a longest possible sequence $t$ that is a subsequence
of both $s_1$ and $s_2$.

The problem of computing the longest common subsequence of two sequences
has been deeply
investigated and polynomial time algorithms are
well-known for the problem~\cite{PD94}.
It is possible to generalize the LCS problem to a set of sequences: in such
case the result is a sequence that is a subsequence of all input sequences.
The problem  is NP-hard even on binary alphabet~\cite{Mai78} and it is not
approximable within factor $O(n^{1-\varepsilon})$, for any constant
$\varepsilon >0$, on arbitrary alphabet~\cite{JL95}.

Computational Biology is a field where several variants of the LCS
problem have been
introduced for various purposes.
For instance researchers defined some similarity measures between genome
sequences based on constrained forms of the LCS problem.
More precisely, it has been studied an LCS-like problem that
deals with two types of symbols ({\em mandatory} and {\em optional}
symbols) to model the differences in the number of occurrences allowed for
each gene~\cite{DBLP:journals/tcbb/BonizzoniVDFRV07,AdiDAM2009}.
An illustrative example is the definition of repetition-free
longest common subsequence~\cite{AdiDAM2009} where, given two sequences $s_1$
and
$s_2$, a repetition-free common subsequence
is a subsequence of both $s_1$, $s_2$ that contains at most one occurrence of
each  symbol.
Such a model can be useful in the genome rearrangement analysis, in particular
when dealing with the exemplar model.
In such framework we want to compute an exemplar sequence, that is a
sequence that contains only one representative (called the exemplar) for
each family of duplicated genes inside a genome.
In biological terms, the exemplar gene may correspond to the original
copy of the gene, from  which all other copies have been originated.

A different variant of LCS that has been introduced to compare
biological sequences is called Constrained Longest Common
Subsequence~\cite{DBLP:journals/ipl/Tsai03}.
More precisely, such variant of LCS can be useful when
comparing two biological sequences
that have a known substructure in common~\cite{DBLP:journals/ipl/Tsai03}.
Given two sequences $s_1$, $s_2$, and a constraint sequence $s_c$, we
look for a longest common subsequence $s$ of $s_1$, $s_2$, such that
$s_c$ is a subsequence of $s$.
The constrained LCS problem admits polynomial-time
algorithms~\cite{DBLP:journals/ipl/Tsai03,DBLP:journals/ijfcs/ArslanE04,DBLP:journals/ipl/ChinSFHK04}
but it becomes NP-hard when generalized to a set of input
sequences or to a set of constraint
sequences~\cite{DBLP:conf/cpm/GotthilfHL08}.

In this paper we introduce a new problem, called
\emph{Doubly-Constrained Longest Common Subsequence} and denoted as $\VLCS$, that extends
both the repetition-free longest common subsequence problem and the constrained
longest common subsequence problem.
More precisely, given two input sequences $s_1$, $s_2$,
the $\VLCS$ problem asks for the longest common subsequence $s$ that
satisfies two constraints:
(i) the number of occurrences of each symbol $\sigma$ is upper bounded
by a quantity $C_o(\sigma)$, and (ii)
$s$ is a supersequence of the strings of a specified constraint set.
First, we design a fixed-parameter
algorithm~\cite{ParameterizedComplexity} when the parameter is the length of
the solution.
Then we give a parameterized hardness result for the Constrained Longest
Common Subsequence, when the number of constraint sequences and the size
of the alphabet are considered as parameters.
This result implies the same parameterized hardness result of \VLCS.

\section{Basic Definitions}
\label{sec:basic-definitions}

Let $s_1$, $s_2$ be two strings over an alphabet $\Sigma$.
Given a string $s$, we denote by $s[i]$ the symbol at position $i$ in string $s$,
and by $s[i \dots j]$, the substring of $s$ starting at position $i$ and ending
at position $j$.
A \emph{string constraint} $C_S$ consists of a set of strings, while
an \emph{occurrence constraint} $C_o$ is a function $C_o: \Sigma \to
\mathbb{N}$, assigning an
upper bound on the number of occurrences of each symbol in $\Sigma$.
First, consider the following variant of the LCS problem.

\begin{Problem}
\label{prob:defCLCS} \textsc{Constrained Longest Common Subsequence} (\CLCS)\\
\textbf{Input:} two strings $s_1$ and $s_2$, a string constraint $C_s$.\\
\textbf{Output:} a longest common subsequence $s$ of $s_1$ and $s_2$,
so that each string in $C_s$ is a subsequence of $s$.
\end{Problem}

The problem admits a polynomial time algorithm when $C_s$ consists of a
single string~\cite{DBLP:journals/ipl/Tsai03,
  DBLP:journals/ijfcs/ArslanE04, DBLP:journals/ipl/ChinSFHK04}, while it
is NP-hard when $C_s$ consists of an arbitrary number of
strings~\cite{DBLP:conf/cpm/GotthilfHL08}.
In the latter case, notice that $\CLCS$ cannot
be approximated, since a feasible solution for the $\CLCS$ problem must
be a supersequence of all the strings in the constraint $C_s$ and
computing if such a feasible solution exists is
NP-complete~\cite{DBLP:conf/cpm/GotthilfHL08}.

\begin{Problem}
\label{prob:defRFLCS} \textsc{Repetition-free Longest Common
  Subsequence} (\RFLCS)\\
\textbf{Input:} two strings $s_1$ and $s_2$.\\
\textbf{Output:} a longest common subsequence $s$ of $s_1$ and $s_2$,
so that $s$ contains at most one occurrence of each symbol $\sigma
\in \Sigma$.
\end{Problem}

The problem is APX-hard even when each symbol occurs at most twice in
each of the input strings $s_1$ and $s_2$~\cite{AdiDAM2009}.
A positive note is that allowing at most $k$ occurrences of each symbol in
each of $s_1$ and $s_2$ results in a $\frac{1}{k}$-approximation
algorithm~\cite{AdiDAM2009}.

We can introduce an even more general version of both the $\CLCS$ and $\RFLCS$
problem, called \emph{Doubly-Constrained Longest Common Subsequence}
(\VLCS) problem.

\begin{Problem}
\label{prob:defVLCS} \textsc{Doubly-Constrained Longest Common
  Subsequence} (\VLCS)\\
\textbf{Input:} two strings $s_1$ and $s_2$, a string constraint $C_s$,
and an occurrence constraint $C_o$.\\
\textbf{Output:} a longest common subsequence $s$ of $s_1$ and $s_2$,
so that each string in $C_s$ is a subsequence of $s$ and $s$
contains at most $C_o(\sigma)$ occurrences of each symbol $\sigma \in
\Sigma$.
\end{Problem}

It is easy to see that $\CLCS$ problem is the restriction of the $\VLCS$
problem when $C_o(\sigma)=|s_1|+|s_2|$ for each $\sigma \in \Sigma$.
At the same time, the $\RFLCS$ problem is the restriction of the $\CLCS$
problem when $C_s=\emptyset$ and $C_o(\sigma)=1$ for each $\sigma \in
\Sigma$.
Therefore the $\VLCS$ problem is APX-hard, since it inherits all
hardness properties of $\CLCS$ and $\RFLCS$.

\section{A Fixed-Parameter Algorithm for \VLCS}

Initially we present a fixed-parameter algorithm for the
$\VLCS$ problem when $|C_s| \leq 1$ (hence the result holds also for the
$\RFLCS$ problem), where the parameter is the size of a solution of
$\VLCS$.
Later on, we will extend the algorithm to a generic set $C_s$.

The algorithm is based on the color coding
technique~\cite{Alon:Yuster:Zwick:1995}.
We recall the basic definition of perfect family of hash
functions~\cite{SchmidtS90}.
Given a set $S$, a family $F$ of hash functions
from $S$ to $\{1, 2, \dots , k \}$ is called \emph{perfect} if for
any $S' \subseteq S$ of size $k$, there exists an injective
hash function $f \in F$ from $S'$ to the set of labels $\{ 1, 2, \dots, k \}$.

Since $|C_s| \leq 1$, we denote by $s_c$ the only sequence in $C_s$.
Let  $k$ be the size of a solution for
$\VLCS$, and recall that a solution contains at most $C_o(\sigma)$
occurrences of each symbol $\sigma \in \Sigma$.
Notice that, since  $s$ is a subsequence of both $s_1$ and $s_2$, and by the
definition of $C_o$, the number of occurrences of each symbol $\sigma \in
\Sigma$ in
a solution $s$ is also (upper) bounded by the number of occurrences of
$\sigma$ in each $s_1$ and $s_2$ (i.e.~$occ(\sigma, s) \leq \min
\{ C_o(\sigma), occ(\sigma, s_1), occ(\sigma, s_2) \}$).
Let $C'_o$ be a function from $\Sigma$ to $\mathbb{N}$ defined as $C'_o(\sigma)
:= \min \{ C_o(\sigma), occ(\sigma, s_1), occ(\sigma, s_2) \}$.

Given $C'_o$ and the sequences $s_1$ and $s_2$, we construct a set
$\widetilde{\Sigma}$ that contains the pairs $(\sigma, i)$ for each
$\sigma \in \Sigma$ and $i \in \{1, \dots, C'_o(\sigma)\}$.
For example, if
$s_1= aaaa\,bbb\,cc\,d$, $s_2=dd\,c\,bbbb\,aaaa$, and
$C_o(a) = C_o(b) = C_o(c) = C_o(d) = 3$,  then the set
$\widetilde{\Sigma}$ is equal to $\{ (a,1), (a,2), (a,3), (b,1), (b,2),
(b,3), (c,1), (d,1) \}$.

Consider now a perfect family $F$ of hash functions from $\widetilde{\Sigma}$
to the set $\{ 1, 2\, \dots, k \}$.
We can associate a function $l:\Sigma\to 2^{\{ 1, 2\, \dots, k \}}$
with each $f\in F$, where
$l(\sigma)=\{ f(\sigma, i): (\sigma, i) \in \widetilde{\Sigma} \}$.
Let $s$ be a solution of the $\VLCS$ problem of length at most $k$, and
let $L$ be a subset of $\{1, \ldots, k\}$.
Then $s$ is an $L$-\emph{colorful} solution w.r.t.~a hash
function $f \in F$ (and its associated function $l$) if and only if
there exists a function $l_1:\Sigma\to 2^{\{ 1, 2, \dots, k \}}$ which
satisfies the following conditions:
\begin{enumerate}[(i)]
\item $\forall \sigma\in\Sigma$, $l_1(\sigma)\subseteq l(\sigma) \cap L$,
\item $\forall \sigma\in\Sigma$, $|l_1(\sigma)|$ is equal to the number
  of occurrences of $\sigma$ in $s$,
\item $\forall\sigma_1, \sigma_2\in\Sigma$, $l_1(\sigma_1)\cap
  l_1(\sigma_2)= \emptyset$.
\end{enumerate}
Intuitively, an $L$-colorful solution $s$ is a sequence such that it is
possible to associate distinct elements (labels) of the set $L$ with all
the characters of $s$ by using the function $l$.
Notice that the length of an $L$-colorful solution $s$ is equal to the
number of labels that $s$ uses, and each symbol $\sigma$ does not occur
more than $C'_o(\sigma)$ times in $s$.

The basic idea of our algorithm is to verify if there exists an $L$-colorful
solution that uses all labels in $L$ or, equivalently, if the length of an
optimal $L$-colorful solution is $|L|$. Such task is fulfilled via
a dynamic
programming recurrence.
Since $F$ is a perfect family of hash functions, for each
feasible solution $s$ of length $k$, there exists a hash
function $f \in F$ such that $s$ is $\{1, \ldots ,k\}$-colorful w.r.t.~$f$.
Therefore, by computing the recurrence for all hash functions of
$F$, we are guaranteed to find a solution of length $k$, if such a
solution exists.


Given a hash function $f$, we define $V[i, j, h, L]$ which takes value $1$ if
and only if there exists an $L$-colorful common subsequence $s$ of $s_1[1 \dots i]$ and $s_2[ 1 \dots
j]$, such that $s$ is a supersequence of $s_c[1 \dots h]$ and $s$ has length
equal to $|L|$ (or, equivalently, $s$ uses all labels in $L$).
Notice that the actual
supersequence can be computed by a standard backtracking technique.
Theorem~\ref{par_alg:VLSC:correct} states that $V[i, j, h, L]$ can be
computed by the following dynamic programming recurrence which is an extension
of the standard equation for the Longest Common Subsequence (LCS) problem~\cite{CLRalg}.

\begin{equation}
\label{eq-rec-VLCS} V[i ,j ,h, L ]= \max \left\{\\
\begin{array}{ll}
V[i-1 ,j , h, L ] & \\
V[i, j-1 , h , L ] & \\
V[i-1 ,j-1 , h, L\setminus \{ \lambda \} ] & \text{if } s_1[i]=s_2[j] \text{ and }
\\
 & \lambda \in L \cap l(s_1[i])\\
V[i-1 ,j-1 , h-1 , L\setminus\{ \lambda \} ] & \text{if } s_1[i]=s_2[j]=s_c[h] \text{
and }\\
 & \lambda \in L \cap l(s_1[i])
\end{array} \right.
\end{equation}

The boundary conditions are $V[0, j, h, L]= 0$ and $V[i,0 , h, L ]=0$  if  $L
\neq \emptyset$, while
$V[i,j , 0, \emptyset ]=1$, and $V[i, j, h, \emptyset]= 0$ when $h>0$.
Moreover, notice that, as a consequence of the recurrence's definition, we
have $V[i, j, h, L]= 0$ for all $h > |L|$.
A feasible solution of length $k$ is $\{ 1, \dots, k\}$-colorful
w.r.t.~$f$ if and only if
$V[|s_1| , |s_2| , |s_c|, \{ 1, \dots, k   \} ]=1$.
In this case, a standard backtracking search can reconstruct the actual
solution.

\begin{Theorem}
\label{par_alg:VLSC:correct}
Let $f\in F$ be a hash function mapping injectively the solution $s$ to the
set of labels $\{1,\ldots, k\}$.
Then Equation~(\ref{eq-rec-VLCS}) is correct.
\end{Theorem}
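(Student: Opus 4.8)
The plan is to prove that, for every entry, $V[i,j,h,L]=1$ if and only if there exists an $L$-colorful common subsequence of $s_1[1\dots i]$ and $s_2[1\dots j]$ of length exactly $|L|$ that is a supersequence of $s_c[1\dots h]$; this characterization is precisely the assertion that Equation~(\ref{eq-rec-VLCS}) is correct, and the claimed top-level value at $(|s_1|,|s_2|,|s_c|,\{1,\dots,k\})$ then follows from the hypothesis that $f$ maps $s$ injectively into $\{1,\dots,k\}$. I would proceed by induction on $i+j$. The induction is well-founded because every recursive term on the right-hand side is evaluated at a strictly smaller value of $i+j$ (the first two terms decrease it by one, the last two by two). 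The base cases $i=0$, $j=0$, $L=\emptyset$, and $h>|L|$ coincide with the stated boundary conditions, and their correctness is immediate from the semantics: an empty prefix of $s_1$ or $s_2$ admits only the empty common subsequence; the empty sequence is a supersequence of $s_c[1\dots h]$ exactly when $h=0$; and a sequence of length $|L|$ can never be a supersequence of $s_c[1\dots h]$ once $h>|L|$.

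For the inductive step I fix $i,j\geq 1$ and $L\neq\emptyset$ and treat the two implications separately. \textbf{Soundness.} Supposing the recurrence returns $1$, I inspect which of the four terms attains the maximum. If one of the first two terms is $1$, the inductive witness over the shorter prefix is, verbatim, a witness over $s_1[1\dots i]$ and $s_2[1\dots j]$, since $s_1[1\dots i-1]$ (respectively $s_2[1\dots j-1]$) is a prefix of $s_1[1\dots i]$. If the third term fires, I take the inductive witness $s'$ for $(i-1,j-1,h,L\setminus\{\lambda\})$ and form $s:=s'\cdot a$ with $a:=s_1[i]=s_2[j]$, matching $a$ at positions $i$ and $j$; extending $l_1$ by assigning the label $\lambda\in l(a)\cap L$ to the new occurrence of $a$ is legitimate because $\lambda\notin L\setminus\{\lambda\}$ keeps the label sets disjoint, and $|s|=|L|$. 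The fourth term is identical except that $a=s_c[h]$, so appending $a$ also advances the supersequence embedding from $s_c[1\dots h-1]$ to $s_c[1\dots h]$.

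\textbf{Completeness.} Supposing an $L$-colorful witness $s$ of length $|L|$ exists, I drive the analysis by the last character $a$ of $s$ (which exists because $L\neq\emptyset$) and its images under the two subsequence embeddings. If the $s_1$-image of $a$ lies strictly before position $i$, then $s$ is already a witness over $s_1[1\dots i-1]$ and the first term equals $1$ by induction; symmetrically for $s_2$ and the second term. Otherwise $s_1[i]=s_2[j]=a$, whose label $\lambda$ satisfies $\lambda\in l(a)\cap L$, and I split according to whether the prefix $s_c[1\dots h]$ is itself a subsequence of $s':=s[1\dots|s|-1]$: if it is, then $s'$ is an $(L\setminus\{\lambda\})$-colorful witness of length $|L|-1$ and the third term applies; if it is not, the fourth term applies.

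I expect the completeness direction to be the delicate one, and within it the disambiguation of the two ``match'' terms. The crux is the elementary fact that a string is a subsequence of $u\cdot a$ but not of $u$ only if its last symbol equals $a$ and its remaining prefix is a subsequence of $u$; applied with $u=s'$ and the last symbol $a$ of $s$, this forces $s_c[h]=a$ and $s_c[1\dots h-1]$ to embed into $s'$ in the second subcase, which is exactly what the fourth term needs. The remaining work is the simultaneous bookkeeping of the four structures attached to a witness — the two embeddings into $s_1$ and $s_2$, the embedding of $s_c$, and the label assignment $l_1$ — and checking that restricting $l_1$ to $L\setminus\{\lambda\}$ still meets conditions (i)--(iii); each such verification is local and routine once the case split is fixed.
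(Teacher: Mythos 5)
Your proof is correct and follows essentially the same route as the paper's: induction on the table indices, a case analysis on whether the last character of an optimal $L$-colorful witness coincides with $s_1[i]$ and $s_2[j]$, and peeling off that character together with one of its assigned labels to reduce to a smaller entry. If anything, your write-up is more careful than the paper's own argument, which only sketches the ``witness implies the recurrence fires'' direction and leaves implicit both the soundness direction and the disambiguation between the two match terms (your subsequence-of-$u\cdot a$-but-not-of-$u$ observation).
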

\begin{proof}
We will prove the theorem by induction, that is we will  prove the correctness of the
value in $V[i_a,j_a , h_a, L_a ]$ by assuming that of
$V[i_b,j_b , h_b, L_b ]$ when  $i_b \leq i_a$,  $j_b \leq i_a$, $h_b \leq
h_a$, $L_b \subseteq L_a$, and at least one inequality is strict.

Let $s$ be an optimal $L_a$-colorful solution for the sequences
$s_1[1,\ldots,i_a]$, $s_2[1,\ldots,j_a]$, $s_c[1,\ldots,h_a]$,
and let $\beta$ be the last symbol of $s$, that is
$s=t\beta$, where $t$ is the prefix of $s$ consisting of all but
the last character.

If $\alpha\neq \beta$ then, just as for the recurrences of the standard LCS
problem~\cite{CLRalg},
the theorem holds.

Therefore we can assume now that $\alpha=\beta$.
Since $s$ is $L_a$-colorful, then there exists a mapping $l_1$ satisfying the
definition of $L$-colorfulness. By condition (ii), $|l_1(\beta)|$ is equal to
the number of occurrences of $\beta$ in $s$. Let $z$ be the label which is
image through $f$ of the last character of $s$.
Then there exists an $L\setminus\{z\}$-colorful solution $t$ of $s_1[1,\ldots,
i_a-1]$, $s_2[1,\ldots, j_a-1]$, $s_c[1,\ldots, j_a]$ (if $t$ is a
supersequence of $s_c[1,\ldots, j_a]$) or of $s_1[1,\ldots,
i_a-1]$, $s_2[1,\ldots, j_a-1]$, $s_c[1,\ldots, j_a-1]$, hence
completing the proof.
\end{proof}

If $f$ is a hash  function that does not map injectively the solution $s$ of length $k$
to the set of labels $\{1,\ldots, k\}$ then, by definition of hash function, there is
a label $z \in \{1,\ldots, k\}$ that is not in the image through $f$
of any character of $s$.
The latter observation also implies that $z$ is not in the image through $l$
of any symbol, therefore for each  set $L$ including $z$, the last two cases
of our recurrence equation cannot apply, which implies that
$V[i,j,h,\{1,\ldots ,k\}]=0$ for all values of $i$, $j$, $h$, hence
estabilshing the correctness of our algorithm.

It is immediate to notice that the total number of entries of the matrix
$V[\cdot,\cdot,\cdot,\cdot]$ is $|s_1||s_2||s_c|2^k$. Furthermore notice
that computing each entry requires at most $O(k)$ time, as case 1 and case 2
of the recurrence require constant time, while case 2 and case 4 require at most
$O(k)$ time, since $|L| \leq k$.
Since there exists a perfect family of hash functions whose size is
$O(\log |\widetilde{\Sigma}|)2^{O(k)}$ and that can be computed in $O( |\widetilde{\Sigma}|\log
|\widetilde{\Sigma}|)2^{O(k)}$ time~\cite{Alon:Yuster:Zwick:1995},
and $|\widetilde{\Sigma}|\leq |s_1$, the algorithm has an
overall $O (|s_1|\log |s_1|)2^{O(k)} + O(|s_1||s_2||s_c|k 2^k )$ time
complexity.

The algorithm actually computes a longest supersequence of $s_c$ that is a
feasible solution of the problem. Assume now that $C_s$ is a generic
occurrence set, and let $x$ be an optimal solution of a generic instance of
the \VLCS problem of size $k$.
It is immediate to notice that, by removing from $x$ all
symbols that are not also in one of the sequence of $C_s$, we obtain a common
supersequence $x_1$ of $C_s$ that is a subsequence of $x$. Moreover,
as $x$ has size $k$, $x_1$
contains at most $k$ characters (where $k$ is the length of an optimal
solution).

Notice that the alphabet consisting of the symbols appearing in at least one
sequence of $C_s$ contains at most $k$ symbols, for otherwise all supersequences of
$C_s$ would be longer than $k$. Consequently there are at most $k^k$ such
supersequences. Our algorithm for a generic $C_s$ enumerates all such
supersequences $s_c$, and applies the algorithm for $|C_s|=1$ on the new set
of constraint sequences made only of $s_c$, returning the longest feasible
solution computed.

The overall time complexity is clearly
$O \left( k^k (|s_1|\log |s_1|)2^{O(k)} + |s_1||s_2||s_c|k 2^k ) \right)$.

\section{W[$1$]-hardness of $\CLCS$}

In this section we prove that computing if there exists a feasible solution
of $\CLCS$ is not only NP-complete, but it is also W[$1$]-hard
when the parameter is the number of string in $C_s$ and the alphabet
$\Sigma$ (see \cite{ParameterizedComplexity} for an exposition on the
consequences of  W[$1$]-hardness).

We reduce the  Shortest Common Supersequence ($\SCS$) problem parameterized by the
number of input strings and the size of alphabet $\Sigma$, which is known
to be W[$1$]-hard~\cite{DBLP:journals/jcss/Pietrzak03}.
Let $R=\{r_1, \ldots, r_k\}$ be a set of sequences over alphabet $\Sigma$, hence $R$ is a generic
instance of the \SCS problem.
In what follows we denote by $l$ the size of a solution of the $\SCS$ problem.

The input of the $\CLCS$ consists of two sequences $s_1$, $s_2$, and a
string constraint $C_s$.
Let $\#$ be a delimiter symbol not in $\Sigma$. Moreover, given a sequence
$r_i=y_1y_2\cdots y_z$ over alphabet $\Sigma$, let $c(r_i)$ be the sequence
$y_1\#y_2\#\cdots \#y_z\#$.
Pose $C_s=\{\#^l\} \cup \{c(r_i): r_i\in R\}$,
let $w$ be a sequence over $\Sigma$ such that
$w$ contains  exactly one occurrence of each symbol in $\Sigma$, and let
$\rev(w)$ be the reversal of $w$. Finally, let $s_1=(w\#)^l$
and $s_2=(\rev(w)\#)^l$. In the following we call each occurrence of $w$ or of
$\rev(w)$ a \emph{block}.

Let $t$ be any supersequence of $\#^l$ that is also a common subsequence of
$s_1$ and $s_2$.
Since in each of those sequences there are $l$ $\#$s, then also $t$ must
contain $l$ $\#$s, which in turn implies that, by construction of $w$, at most one symbol
of each block can be in $t$.  Therefore $t$ contains at
most $2l$ symbols.
At the same time, let $p$ be a generic sequence  no longer than $2l$,
ending with a $\#$ and such that no two symbols from $\Sigma$ appear
consecutively in $p$.
Since each symbol of $\Sigma$ occurs exactly once in $w$,  it is immediate to
notice that $p$ is a common  subsequence of $s_1$ and $s_2$.
Consequently, the set of all supersequences of $\#^l$ that are also  common subsequences of
$s_1$ and $s_2$ is equal to the set of sequences $q$ with length not larger
than $2l$ and such that (i) $q$ contains exactly $l$ $\#$s, (ii) $q$ ends
with a $\#$, and (iii) taken two consecutive symbols from $q$, at least one of
those symbols is equal to $\#$.

An immediate consequence is that there
exists a feasible solution of length $2l$ of the instance of \CLCS made of
the set $C_s$ and
the two sequences $s_1$ and $s_2$ iff there exists a supersequence of length
$2l$ of the set $R$ of sequences.

The reduction described is an
FPT-reduction~\cite{ParameterizedComplexity}.
Finally, notice that the W[1]-hardness of $\CLCS$ with parameters
$|C_s|$ and $|\Sigma|$ implies the W[1]-hardness of $\VLCS$ with
parameters $|C_s|$ and $|\Sigma|$ since $\CLCS$ is a restriction of the
$\VLCS$ problem.

Moreover, notice that the same reduction can be applied starting from the
SCS problem over binary alphabet, implying that
the \VLCS problem is NP-hard
over a fixed ternary alphabet, as the \SCS problem is NP-hard over a binary
alphabet~\cite{RU81}.

\end{document}